\newtheorem{theorem}{Theorem}
\newtheorem{lemma}{Lemma}
\begin{document}
%
\title{Leveraging Multi-Connectivity for Multicast Video Streaming}

\author{Sadaf~ul~Zuhra\IEEEauthorrefmark{1},~\IEEEmembership{Member,~IEEE,}
        Prasanna~Chaporkar\IEEEauthorrefmark{1},~\IEEEmembership{Member,~IEEE,}
        Abhay~Karandikar\IEEEauthorrefmark{1},~\IEEEmembership{Member,~IEEE,}
        and~Pranav~Jha\IEEEauthorrefmark{1},~\IEEEmembership{Member,~IEEE}
        \vspace{-0.7cm}
\thanks{\IEEEauthorrefmark{1} 
Sadaf ul Zuhra is with NEO, Inria Sophia Antipolis Mediterranée. e-mail: sadaf-ul.zuhra@inria.fr. Prasanna Chaporkar, Abhay Karandikar, and Pranav Jha are with the Department of Electrical Engineering, Indian Institute of Technology Bombay.
e-mail: \{chaporkar, karandi, pranavjha\}@ee.iitb.ac.in.
Abhay Karandikar is currently the Director, Indian Institute of Technology Kanpur (on leave from IIT Bombay). e-mail: karandi@iitk.ac.in.}
}%

\maketitle

\begin{abstract}
Multi-connectivity has emerged as a key enabler for providing seamless connectivity in cellular mobile networks. However, its potential for improving the quality of multicast transmissions has remained unexplored. In this paper, we investigate the use of multi-connectivity in wireless multicast streaming. Multi-connectivity can significantly improve the performance of multicast services. It especially benefits the cell edge users who often suffer from poor channel conditions. In this work, we assess the impact of multi-connectivity on the performance of multicast streaming. We propose procedures for establishing multi-connectivity in a multicast system and address the associated resource allocation problem. We prove that the optimal resource allocation problem is NP-hard. We propose a greedy approximation algorithm for this problem and prove that no other polynomial-time algorithm can provide a better approximation. Since video streaming is the primary use case under consideration here, we use traces from actual videos to generate realistic video traffic patterns in our simulations. Our simulation results clearly establish that multi-connectivity results in considerable performance improvement in multicast streaming.
\end{abstract}

\begin{IEEEkeywords}
Multicast, MBMS, Multi-connectivity, LTE, 5G, Video streaming
\end{IEEEkeywords}

\IEEEpeerreviewmaketitle

\section{Introduction}\label{sec:intro}
Cellular mobile networks today need to cater to an extremely high density of users and provide them with high quality video streaming~\cite{5g_basic}. Comprising nearly $80 \%$ of the total data traffic~\cite{ericsson}, video has become one of the major drivers of innovation in cellular mobile networks. Networks are constantly adapting to meet the increasing demand for high quality video streaming. Developing techniques that optimize resource utilization is essential to successfully meet the resource requirements of these bandwidth intensive services. In this paper, we focus on wireless multicast transmission and multi-connectivity as means for meeting these requirements. 
\par 
Multicast refers to one to many transmission in which a base station can transmit common content to multiple users simultaneously. It enables serving several users who require the same content, on the same resources. This is especially useful for transmitting video streaming content like live telecasts of sports events, movie premiers, political events, news telecasts. Using multicast for such services can save considerable network resources and serve a large number of users in a limited bandwidth~\cite{self,tech_report}.
\par
Multi-connectivity (MC) allows users to potentially connect to and receive content from multiple base stations and over multiple Radio Access Technologies (RATs) simultaneously. Multi-connectivity is expected to be a key enabler in Third Generation Partnership Project (3GPP) Fifth Generation (5G) wireless networks~\cite{multilink}. The high data rate, ultra-reliable low latency, and high mobility requirements of 5G necessitate the reduction of radio link failures due to mobility. Multi-connectivity makes it possible to avoid such failures and ensures seamless connectivity for mobile users~\cite{mobility}.
By allowing users to receive content from multiple base stations simultaneously, it allows serving a larger number of users and improves the performance of cell edge users.
\par 
Using multi-connectivity with multicast transmission, advantages of both these techniques can be pooled to provide significant performance improvements in video streaming. 
Using multi-connectivity with multicast significantly increases the serving capacity of a cell, reduces the dependence of multicast transmissions on the weakest users in the system, and makes the multicast operations more robust. 
Multi-connectivity has received considerable attention from the research community in the past few years especially for throughput and handover improvement~\cite{dual_lte,dual_1,dual_handover,dual_hetnet} but its use with multicast transmissions has not been considered. 
\par
In this paper, we explore the use of multi-connectivity in multicast transmissions. We define procedures for establishing multi-connectivity for users in a multicast system. Since a multi-connected system involves users receiving content from multiple base stations simultaneously, the corresponding resource allocation problem needs to consider a global view of the system to make optimal allocation decisions. We formulate the resource allocation problem for this system with the objective of maximizing the total number of users served. We prove that this resource allocation problem is NP-hard and propose a centralized greedy approximation algorithm for solving it. Since centralized resource allocation incurs additional control overheads, we also propose a distributed allocation policy. We evaluate the performance improvements provided by multi-connectivity in multicast through extensive simulations. 
\par
 Throughout this paper, we discuss the procedures and solutions in the context of Multimedia Broadcast Multicast Services (MBMS) of 3GPP Long Term Evolution (LTE). This is because the standardization for MBMS in 5G has not yet been completed~\cite{5g_enhancements}. The proposed procedures can be easily extended to a 5G system since the basic features of MBMS services have been adopted in 5G as well~\cite{5g_enhancements}. Similarly, the discussions on resource allocation hold true for both LTE and 5G systems.

\subsection{Related Literature}
Multi-connectivity plays a pivotal role in enhancing system capacity, improving reliability, avoiding radio link failures, and reducing outage probabilities~\cite{mobility}. As a result, it has become an essential component of the next generation of wireless mobile networks~\cite{3gpp_5gdc} that demand increased reliability and low latency along with high data rates.
Dual connectivity was introduced in Release $12$~\cite{rel12} of 3GPP standard for LTE for addressing connectivity issues arising due to mobility of users. Mobility improvement continues to be one of the primary focuses of multi-connectivity research for 5G networks. In this section, we discuss the current state of the research being carried out in various aspects of multi-connectivity in cellular mobile networks.
\par
A physical layer design for New Radio (NR) MBMS (NR-MBMS) has been proposed in~\cite{5g_phy} by building on the current specifications of 5G NR. Mitigating radio link failures using multi-connectivity in ultra dense deployments of intra-frequency 5G networks has been investigated in~\cite{mobility}. It is shown in~\cite{mobility} that multi-connectivity results in significant reduction in radio link failures while also improving the throughput of cell-edge users. A form of proportional fair allocation policy for multi-connected ultra dense networks has been proposed in~\cite{effective}. Under this policy, the priority of a user is determined based on load balancing and its signal characteristics.
\par
Stringent network availability is essential for Ultra Reliable Low Latency (URLLC) applications in 5G. In~\cite{urllc}, it has been shown through extensive numerical and simulation based analysis that multi-connectivity improves the network availability for URLLC applications. It also results in better utilization of system resources in ultra-dense 5G networks through load-aware cell selection~\cite{load}.
\par
Various architectures have been proposed for implementing multi-connectivity in 5G~\cite{multi_icc}. Throughput performance of multi-connectivity in distributed and cloud-based heterogeneous network architectures has been compared in~\cite{nonideal}. Cloud-based heterogeneous networks have been shown to provide better throughput performance. In~\cite{mc_architecture}, an architecture for 5G that integrates multiple Radio Access Technologies (RATs) has been proposed. This architecture allows seamless integration of LTE and Wireless Local Area Network (WLAN) with 5G and enables seamless inter-RAT multi-connectivity. A control and user plane split architecture for multi-connectivity in 5G NR has been proposed in~\cite{cu_split}. The proposed approach does not use macro cells for user plane transmissions of multi-connected users to avoid impacting the performance of single-connected users.
\par
Compared to single connected systems, there is tremendous reduction in the transmit power required to achieve a certain outage probability and spectral efficiency when multi-connectivity is used~\cite{reliable}. Multi-connectivity has also been studied as a means of optimizing power consumption, especially for 5G heterogeneous cloud radio access networks~\cite{saimler2020multi}.
\par
 Millimeter-wave (mmWave) transmissions form a core component of 5G NR~\cite{imt} owing to their capability for providing high data rates and low latency. However, they suffer from poor propagation characteristics~\cite{aziz2016architecture} resulting in rapid channel variations and poor session continuity.
  Use of multi-connectivity along with guard bands has been shown to provide major improvement in mmWave session continuity~\cite{session_continuity}.
  In~\cite{petrov2017dynamic}, the authors put forward a methodology for performance evaluation of multi-connectivity in ultra-dense urban mmWave networks. Their evaluations reveal that multi-connectivity leads to improvements in denial of service and session drop probabilities.
  \par
  Complexity and signaling overheads are expected to go up as the number of multi-connected links increases. The tradeoff between the system complexity and performance improvement in a multi-connected mmWave system has been studied in~\cite{gapeyenko2018degree}. It has been shown that having up to $4$ simultaneous links leads to significant improvements in outage probability and spectral efficiency. Beyond this, the gains obtained are marginal.
  Link scheduling problem in multi-connected mmWave networks has been addressed in~\cite{tatino2018maximum}. The authors propose a network throughput optimizing algorithm that approaches the global optimum solution. Uplink multi-connectivity frameworks have been proposed in~\cite{giordani_efficient,giordani2016multi} that can efficiently monitor channel dynamics and link directions in mmWave transmissions, leading to efficient scheduling and session management.
 Multi-connectivity along with network coding enable transmission of high quality video streaming services over mmWave networks~\cite{drago2018reliable}. 
\par
In addition to cellular networks, multi-connectivity also finds applications in Vehicle-to-anything (V2X) services. It can play a pivotal role in fulfilling the Quality of Service (QoS) requirements in V2X services. A Radio Access Network (RAN) based solution for managing the communication interfaces of vehicles in a V2X network has been proposed in~\cite{pimrc_v2x}.
\par
Multi-connectivity finds use in a diverse set of applications as is made clear by the state-of-the-art discussed in this section. However, its use in multicast streaming has remained unexplored. In this work, we investigate the use of multi-connectivity for this important application. Our work clearly reveals the tremendous potential of multi-connectivity multicast for transforming video streaming over mobile cellular networks.

\subsection{Contributions}
The main contributions of this paper are as follows:
\begin{itemize}
\item We propose the use of multi-connectivity in wireless multicast transmissions. We asses its impact on the performance of multicast video streaming and establish the performance gains resulting from it.
 \item We define procedures for establishing multi-connectivity in the existing 3GPP multicast architecture and the associated control signaling requirements.
 \item We formulate the resource allocation problem in a multi-connected multicast system with the objective of maximizing the number of users served and prove that it is an NP-hard problem.
We, therefore, propose a greedy approximation algorithm for this problem that provides an approximation factor of $(1-1/e)$. This is the best approximation possible for the problem.
 \item Through extensive simulations, we establish the performance improvements resulting from the use of multi-connectivity in wireless multicast, particularly for video streaming. To generate realistic video traffic patterns, we use traces from actual video streams~\cite{asu_traces,asu_traces2} in all our simulations.
\end{itemize}

The rest of this paper is organized as follows. We provide an overview of the existing 3GPP standards for multicast and multi-connectivity in Section~\ref{sec:dc_multicast}. This is followed by a discussion of how these two techniques can be used together within the fourth and fifth generation wireless mobile networks in Section~\ref{sec:mc_mbms}. We define the procedures for establishing multi-connectivity in multicast transmissions in Section~\ref{sec:procedure_mc_multicast}. The system model and the associated resource allocation problem formulation are discussed in Section~\ref{sec:sys_model_dc}. In Section~\ref{subsec:nphard_mc_multicast}, we prove NP-hardness of the resource allocation problem. We present the approximation algorithm and prove its approximation ratio in Section~\ref{sec:greedy_mc_multicast}. We then examine the use of distributed resource allocation for MC multicast in Section~\ref{sec:RA_mc_multicast}. Finally, we present the simulation results in Section~\ref{sec:simulations_mc_multicast} and conclude in Section~\ref{sec:conclusions_mc_multicast}.
\par

\section{MBMS: An Overview} \label{sec:dc_multicast}
Multicast and broadcast services were standardized in Release $9$~\cite{tenth} of the 3GPP standards under the name of Multimedia Broadcast Multicast Services (MBMS).
MBMS includes two modes of multicast operation, Single Cell Point-To-Multipoint (SC-PTM) and MBMS Single Frequency Network (MBSFN). SC-PTM involves multicasting of content within a single cell whereas, in MBSFNs, all eNodeBs (eNBs) in an MBSFN area transmit the same content in strict synchronization~\cite{mbmsrel14}. MBSFN transmissions require precise synchronization between all eNBs in the MBSFN area and an extended Cyclic Prefix (CP) to enable User Equipments (UEs) to effectively combine the content received from multiple eNBs. The extended CP reduces the system throughput and the need for tight synchronization between eNBs results in significant control overheads. Various enhanced versions of MBMS have been standardized in later releases of 3GPP standards~\cite{kaliski2019further}. Further enhanced MBMS (FeMBMS) introduces, among other things, new features for handling video, larger bandwidth allocation for MBMS, and multi-cell connectivity for MBMS services~\cite{kaliski2019further}.

\subsection{MBMS Control Signaling}
MBMS, as standardized by 3GPP, is an idle mode procedure~\cite{3gpp_idle}. This means that, there is no need to establish a Radio Resource Control (RRC) connection for a UE to receive MBMS services. Most of the control information relating to MBMS operations is carried on a separate logical channel, the Multicast Control Channel (MCCH)~\cite{3gpp_36331}. The only MBMS related information sent over the Broadcast Control Channel (BCCH) is the information needed by UEs to acquire the MCCH(s). This information is carried by the MBMS specific SystemInformationBlock, {\it SystemInformationBlockType13} (SIB13)~\cite{3gpp_36331}. MBMS user data is carried over Multicast Traffic Channels (MTCH). Using the information provided over the MCCH, a UE can read the MTCH corresponding to the MBMS session that it is interested in.  
\subsection{MBMS Architectural Aspects}
To support MBMS, three new network elements have been added to the LTE architecture, Broadcast Multicast Service Centre (BM-SC), MBMS GateWay (MBMS-GW) and Multicell/Multicast Coordination Entity (MCE)~\cite{mbmsrel14}. The positioning of these elements within the architecture is as shown in Figure~\ref{fig:mbms_architecture}. BM-SC serves as an interface between core network and multicast/broadcast content providers. It is responsible for transporting MBMS data into the core network, managing group memberships and subscriptions and charging for MBMS sessions~\cite{tenth}. MCE is responsible for allocating radio resources to the eNBs~\cite{tenth} MBSFN operations. MBMS-GW uses IP multicast to forward the MBMS session data to the eNBs. The eNBs can then transmit the data to the UEs via wireless multicast/broadcast.
\par 
\begin{figure}[!htb]
\centering
\includegraphics[scale = 0.6]{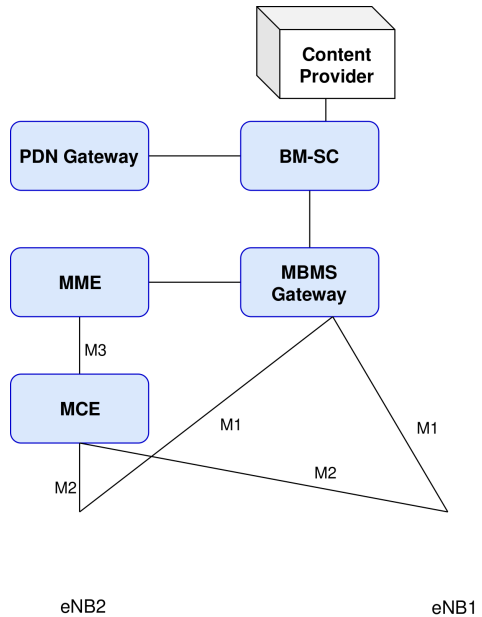}
\caption{MBMS architecture}
\label{fig:mbms_architecture}
\end{figure}


\section{Enabling Multi-connectivity in MBMS} \label{sec:mc_mbms}
MBMS user plane protocol architecture defines an additional Synchronization (SYNC) protocol layer on the transport network layer for content synchronization~\cite{3gpp_36300}. It carries additional information for identifying transmission times and detecting packet loss. The SYNC protocol is terminated in BM-SC and the eNBs. 
Since MBMS data sent to eNBs in a particular region emanates from the same BM-SC, the contents arriving at these eNBs are in sync. UEs can, therefore, receive and combine multiple copies of the same content received from these eNBs. The proposed multi-connectivity multicast exploits this inherent synchronization in MBMS systems. It enables users to obtain multicast content from multiple sources without the need for any additional synchronization.
Moreover, since MBMS is an idle mode procedure, we do not require UEs to establish an RRC connection to any eNB for using MC multicast. A UE may use MC multicast while being in RRC idle mode.
\par
We define a different dynamic between the primary and secondary eNBs of a multi-connected UE than what is traditionally defined in unicast transmissions~\cite{3gpp_5gdc}. 
Firstly, depending on its capability, we allow that a UE can connect to any number of eNBs and receive multicast content from all of them. A UE can also stay in the RRC idle mode if it is not connected to any eNB and receive content from any number of eNBs. For a UE using MC multicast in RRC idle mode, `primary' eNB refers to the eNB that it is camped on. For a UE in RRC connected mode, `primary' eNB refers to the eNB that it is connected to. All other eNBs that the UEs may receive content from are referred to as secondary eNBs. Secondly, in MC multicast, primary and secondary eNBs of a UE do not work in a traditional master-slave configuration. Secondary eNBs are not dictated by the primary eNB in their interaction with the UE. A multicast UE can receive relevant control information and multicast data from multiple eNBs independent of each other. As such, there is no real distinction between `primary' and `secondary' eNBs for a UE. Each eNB that serves the UE under MC multicast is equivalent for the UE. Note that, we use the terms `primary' and `secondary' eNB in various places in this paper for the ease of distinguishing between various eNBs that a UE is receiving data from. 
\par
MC multicast has a potential to provide all the benefits of MBSFN transmissions in a considerably simpler framework. Like in MBSFNs, a UE can receive multicast content from a number of eNBs, resulting in improved Signal-to-Noise Ratio (SNR), especially for the cell edge users. However, unlike MBSFN operations, eNBs are not required to use the same  Physical Resource Blocks (PRBs) 
for streaming the multicast content. In MC multicast, the same MBMS services are streamed through multiple eNBs and each eNB allocates PRBs to the multicast streams independently. Each eNB can, therefore, optimize the resource allocation to various services in its cell. The resulting frequency diversity significantly improves the probability of reliable reception of the MBMS content. A multicast UE can choose to decode one of the multiple copies of the content received by it. As we shall see in Section~\ref{sec:simulations_mc_multicast}, this results in significant performance improvement of multicast operations. 

\section{Procedures for Establishing Multi-Connectivity in Multicast Transmissions} \label{sec:procedure_mc_multicast}
In this section, we propose the procedures for establishing multi-connectivity in MBMS. We define MC multicast as a user initiated mechanism. As discussed in Section~\ref{sec:dc_multicast}, a UE needs to acquire the MBMS specific SIB, SIB13 and MCCH from an eNB to begin receiving MBMS session content from it. The procedures for establishing multi-connectivity for UEs in RRC connected and RRC idle modes vary in certain signaling aspects. We explain each of these procedures below.
\begin{enumerate}
 \item {\bf RRC Idle mode}:
 A UE in RRC idle mode is informed of the available MBMS sessions by its primary cell that it is camped on. If the UE is interested in receiving an MBMS session and is capable of multi-connectivity, it can choose to receive the content from multiple eNBs in its vicinity. If the UE chooses to receive the session from multiple eNBs, it starts listening to the broadcast channels of its neighboring eNBs. It receives MasterInformationBlock (MIB), SIB1, SIB13 of the neighboring cells. SIB13 obtained from the eNBs contains the MBMS relevant information of these eNBs. The UE can then receive the content from any number of these eNBs where the MBMS session of its interest is available.
 \par
 To start receiving the session content, the UE reads MCCH(s) of these eNBs. MCCH contains the information needed by the UE to obtain the relevant MTCH(s). This procedure is illustrated in Figure~\ref{fig:procedure_idle_sc}. The UE thus receives multiple copies of the same multicast content over MTCHs of multiple eNBs, resulting in improvement of the received SNR. It should be noted here that the UE does not need to establish an RRC connection to any of the eNBs in this procedure. 
 
 \begin{figure}[t!]
\centering
\includegraphics[scale = 0.5]{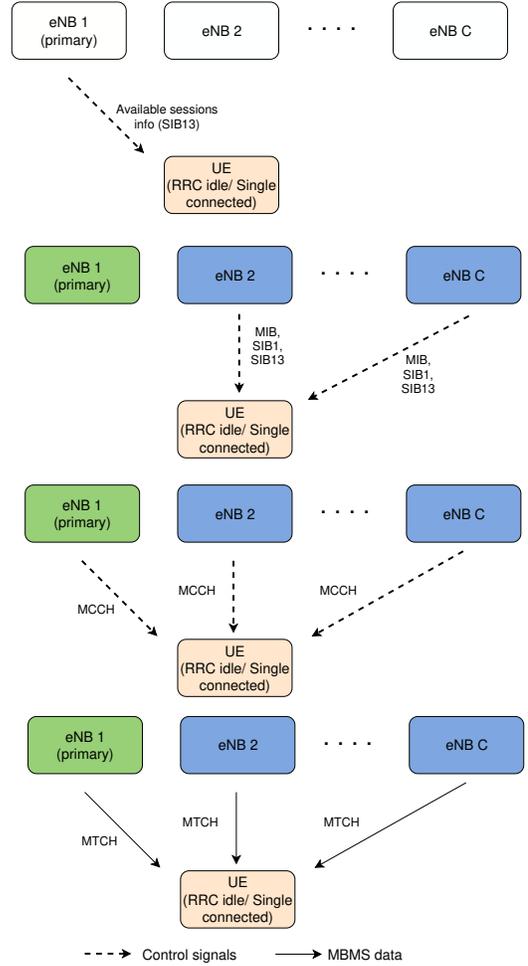}
\caption{Procedure for establishing multi-connectivity multicast for UEs in RRC idle mode and single connected UEs.}
\label{fig:procedure_idle_sc}
\end{figure}

\item {\bf RRC Connected mode}:
A UE may have already established an RRC connection for some unicast service by the time an MBMS session starts. After a UE establishes an RRC connection to a cell, it stops listening to the broadcast channels of other cells. When such a UE is informed of an MBMS session that it is interested in, it can choose to either receive the content only from the cells that it is connected to or from multiple cells using MC multicast. The procedure for establishing multi-connectivity for such a UE will be different depending upon whether the UE is single connected or dual connected. The procedures for both these cases are defined as follows. \par
 a) {\it Single Connected UE}:
 A single connected UE is notified of the available MBMS sessions by the primary eNB that it is connected to. If the UE is interested in an MBMS session, it can either choose to receive the corresponding content from its primary cell alone or use MC multicast to receive it from multiple sources. In case the UE decides to receive the MBMS content from its primary eNB, it reads the corresponding MCCH and receives the content over the relevant MTCH of its primary eNB. If the UE chooses to use MC multicast instead, it starts listening for broadcast information from its neighboring cells. It acquires MIB, SIB1 and SIB13 of these cells. It then acquires MCCH(s) of the additional cells where the session of its interest is available. MCCH provides the UE with the allocation information needed to read MTCH of its interest. This procedure is illustrated in Figure~\ref{fig:procedure_idle_sc}. Note that, the UE does not establish an RRC connection to any of the secondary eNBs. \par

b) {\it Dual Connected UE}:
Consider a UE that is dual connected and receiving some unicast service from two different cells. This UE can choose to receive MBMS content using MC multicast in one of the following ways. 
\par
\ \ i. {\it From the primary eNB alone}: In this case, the UE acquires MCCH and the relevant MTCH from its primary eNB and no additional procedures are required. \par
\ \ ii. {\it From its primary eNB and some other eNBs in its neighborhood}: This is the same as the case of a single connected UE in 2a above and the same procedures apply. \par
\ \ iii. {\it From the two eNBs it is dual connected to}: If the MBMS session of interest is available in the secondary cell of the UE, it can choose to receive MBMS content from the same two eNBs that it is dual connected to. In the existing 3GPP standards for dual connectivity, the primary eNB acts as the control plane anchor for the UE~\cite{3gpp_5gdc}. All the control information from the secondary eNB is transmitted to the UE via the primary eNB. Therefore, we propose that, SIB13 from the secondary eNB is also transmitted over the X2 interface to the primary eNB instead of the UE having to listen for it separately. It can then acquire MCCH and relevant MTCH from both the cells independently. \par
\ \ iv. {\it From the two eNBs it is connected to as well as other eNBs in its neighborhood}: This is a combination of ii. and iii. above and the same procedures are followed. Figure~\ref{fig:procedure_dual} illustrates this procedure.
\end{enumerate}

\begin{figure}[t!]
\centering
\includegraphics[scale = 0.5]{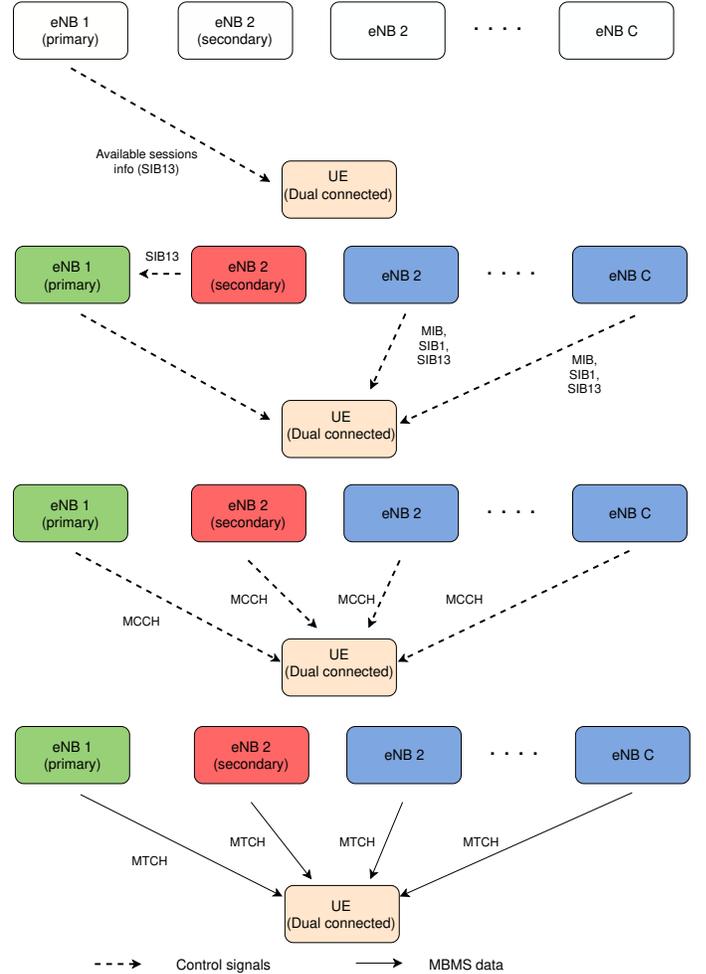}
\caption{Procedure for enabling multi-connectivity multicast for dual connected UEs.}
\label{fig:procedure_dual}
\end{figure}
In the next section, we discuss the resource allocation problem in an MC multicast system.

\section{Resource Allocation in MC Multicast} \label{sec:sys_model_dc}
We formulate the MC multicast resource allocation problem with the aim of maximizing the number of multicast users successfully served in the system. Before defining the problem, we discuss the details of our system model.
\subsection{System Model} \label{subsec:sys_model_mc_multicast}
We consider a network of $C$ cells. Each cell has an eNB located at the center. There are $M$ multicast UEs in the system that are all capable of multi-connectivity and can potentially be served by any number of eNBs. There is a multicast session available for streaming in all the cells. The multicast stream has a certain required rate $R$ at which the content needs to be streamed to the subscribed UEs. Multicast content is streamed at this rate $R$ whenever the multicast session is active. All multicast UEs are subscribed to the ongoing multicast session. The UEs in a cell that are subscribed to the same multicast session form a single multicast group and receive the streaming content over the same PRBs. 
The UEs can potentially receive the multicast streaming content from any number of neighboring eNBs in addition to their primary eNB. A multi-connected UE, therefore, belongs to multiple multicast groups streaming the same content. We assume that the multicast stream in a cell is allocated one PRB in each sub-frame.
 \par
 Resource allocation to various multicast streams can either be done by each eNB independently or by a central controller that manages the eNBs. The multicast data stream in the primary and secondary cells of a UE may or may not be scheduled on the same PRB. 
\par
 The channel states of UEs vary across time and frequency. As a result, a UE experiences different channels in different sub-frames and across different PRBs in a sub-frame. Depending on the channel state of a UE, there's a certain maximum rate it can successfully decode in a PRB. Since the multicast content is transmitted at rate $R$, a UE may or may not be successfully served by the eNB that it is connected to. For instance, say cell $c$ is streaming the multicast content over PRB $j$ in sub-frame $t$. Let $r^c_{jk}[t]$ be the maximum decodable rate for UE $k$ in PRB $j$ of cell $c$ in sub-frame $t$. If $R > r^c_{jk}[t]$, UE $k$ will not be able to successfully receive the content from cell $c$. On the other hand, if $R \leq r^c_{jk}[t]$, UE $k$ will be successfully served by $c$.
  A multi-connected UE successfully receives data in sub-frame $t$ if it can decode the content from any of the eNBs that it receives the content from. On the other hand, a UE that is not multi-connected would successfully receive data only if it can decode the content from its primary cell. We now discuss and formally define the resource allocation problem for this system. 

\subsection{Problem Formulation} \label{subsec:problem}
The resource allocation problem in an MC multicast system is aimed at serving as many UEs successfully in a sub-frame as possible. Since a multi-connected UE can receive streaming content from multiple eNBs, its performance depends on the PRB allocation in multiple cells. Therefore, we must optimize over all the cells in a region since allocation of resources in individual cells is not optimal for a multi-connected system. We now mathematically define the optimal resource allocation problem.
\par
$M$ multicast users distributed in $C$ cells can potentially receive multicast content from all the eNBs in their neighborhood. $[M]$ is the universal set of all users. There are $N$ PRBs available in each cell. Denote by $U_{jc} \subseteq [M]$, the set of users that would be successfully served if PRB $j$ is allocated to the multicast service in cell $c$. Set ${\cal U}_c = \{U_{1c},U_{2c}, \ldots, U_{Nc} \}$ is the sub-collection of such sets for cell $c$. Let ${\cal U} = \{ {\cal U}_1, \ldots , {\cal U}_C\}$. The resource allocation problem can then be stated as follows: \par
${\bf K^\star}$ :
Given the universe $[M]$ and a collection of sets ${\cal U} = \{ {\cal U}_1, \ldots , {\cal U}_C\}$, determine ${\cal U}' \subseteq {\cal U}$ such that
$|\bigcup_{U_{jc} \in {\cal U}'}U_{jc}|$ is maximized subject to $|{\cal U}'| = C$ and $|{\cal U}'\cap {\cal U}_c| =1, \ \forall \ c$.
\par
In the next section, we discuss the computational feasibility of this problem.

\section{${\bf K^\star}$ is NP-hard} \label{subsec:nphard_mc_multicast}
The optimal resource allocation problem ${\bf K^\star}$ is an NP-hard problem. We prove this by reduction from the Maximum Coverage Problem (MCP)~\cite{mcp}. MCP is a well known NP-hard problem and is defined as follows: \\
 MCP takes as input a universal set ${\cal S}$, a number $k$ and a collection of sets ${\cal T} = \{T_1,T_2, \ldots, T_m\}$ where each $T_j \subseteq {\cal S}$. The objective of MCP is to determine a sub-collection ${\cal T}' \subseteq {\cal T}$ such that
 ${\cal T}' \in \arg\max_{|{\cal T}'| \leq k} |\bigcup_{T_j \in {\cal T}'} T_j|$.

\begin{theorem} \label{theorem:np_hard}
 ${\bf K^\star}$ is an NP-hard problem.
\end{theorem}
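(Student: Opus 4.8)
The plan is to establish NP-hardness via a polynomial-time reduction from MCP, so that any instance of MCP can be solved by solving a suitably constructed instance of ${\bf K^\star}$. Given an arbitrary MCP instance with universe ${\cal S}$, bound $k$, and collection ${\cal T} = \{T_1, \ldots, T_m\}$ (each $T_j \subseteq {\cal S}$), I would build the ${\bf K^\star}$ instance by setting the universe to $[M] = {\cal S}$, the number of cells to $C = k$, and giving every cell an identical (but separately indexed) copy of the entire collection, i.e. $N = m$ and $U_{jc} = T_j$ for all $j$ and every cell $c \in \{1, \ldots, k\}$, so that ${\cal U}_c = {\cal T}$ for each $c$. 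This construction merely replicates ${\cal T}$ a total of $k$ times and is therefore computable in time polynomial in the size of the MCP instance.

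The core of the argument is to show that ${\bf K^\star}$ on the constructed instance and MCP on the original instance attain the same optimal objective value. Any feasible solution of ${\bf K^\star}$ selects exactly one set from each of the $k$ copies, producing a multiset of $k$ sets from ${\cal T}$ whose union-cardinality is the objective; since this union depends only on the \emph{distinct} sets chosen, and since repetitions across copies are allowed, every feasible selection corresponds to a sub-collection of ${\cal T}$ of size at most $k$ with the same coverage. Conversely, any MCP sub-collection ${\cal T}'$ with $|{\cal T}'| \leq k$ yields a feasible ${\bf K^\star}$ selection by assigning its members to distinct cells and padding the remaining cells with already-chosen sets, leaving the union unchanged. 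Hence the two optima coincide.

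The step I expect to demand the most care is exactly this value-equivalence, because the feasibility constraints of the two problems differ in form: ${\bf K^\star}$ imposes the partition-type requirement of picking exactly one set from each of the $C$ groups --- exactly $C$ sets in total --- whereas MCP only caps the number of chosen sets through $|{\cal T}'| \leq k$. The key observation that reconciles them is that choosing the same set from two different copies ``wastes'' a slot, so the rigid ``exactly $C$, one per group'' rule effectively collapses to the looser ``at most $k$ distinct sets'' of MCP once the copies are identical; the separate indexing of the copies is what keeps the constraint $|{\cal U}' \cap {\cal U}_c| = 1$ meaningful despite the sets coinciding. With the optimal values matched, a polynomial-time algorithm for ${\bf K^\star}$ would solve MCP in polynomial time, and since MCP is NP-hard, ${\bf K^\star}$ is NP-hard as well.
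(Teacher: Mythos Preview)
Your proposal is correct and matches the paper's own proof almost exactly: the paper also reduces from MCP by setting $C=k$, $N=m$, and $U_{jc}=T_j$ for every cell $c$, and then argues (by contradiction rather than your direct two-way value-equivalence) that an optimal ${\bf K^\star}$ solution on this instance yields an optimal MCP solution. The only cosmetic difference is that you phrase the correspondence as equality of optimal objective values with an explicit padding argument, while the paper phrases it as a solution-mapping plus a contradiction step; the underlying reduction and the key observation about duplicate picks ``wasting'' slots are identical.
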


\begin{proof}
The detailed proof is given in Section~\ref{subsec:proof_np}.
\end{proof}
Now that we have proved that the multi-connectivity resource allocation problem is NP-hard, the best we can do is construct approximation algorithms that provide some performance guarantees. We propose one such algorithm and prove its performance guarantees in the next section.

\section{Approximation Algorithm for ${\bf K^\star}$} \label{sec:greedy_mc_multicast}
We construct a Centralized Greedy Approximation (CGA) algorithm for solving ${\bf K^\star}$. The pseudo-code for this algorithm is given in Algorithm~\ref{algo:greedy}. CGA works by maximizing the number of additional users served in each iteration. In the first iteration, CGA chooses $U_{jc}$ from ${\cal U}$ that serves the maximum number of users. In the subsequent steps, it picks $U_{jc}$'s that serves the maximum number of unserved users. In each step, the set chosen is from a different sub-collection ${\cal U}_c$ i.e., $c$ in the subscript of the chosen sets is different for each set picked. The collection of sets chosen after $C$ such iterations ${\cal U}_G$, is the output of the algorithm.
\par 
In the next result, we prove that the resulting solution has an approximation factor of $\left(1-\frac{1}{e}\right)$. This means that the solution provided by this approximation algorithm serves at least $\left(1-\frac{1}{e}\right)$ of the number of users that would be served by the optimal algorithm.  \par

\begin{theorem} \label{theorem:approximation}
 The CGA algorithm (Algorithm~\ref{algo:greedy}) is a $\left(1-\frac{1}{e}\right)$ approximation for ${\bf K^\star}$. In fact, no other algorithm can achieve a better approximation unless P = NP.
 \end{theorem}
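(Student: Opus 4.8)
The plan is to prove the two halves of the statement separately: first that the CGA algorithm attains the factor $\left(1-\frac{1}{e}\right)$, and then that this factor is the best achievable by any polynomial-time algorithm.

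For the achievability half, I would first observe that the objective $f(\mathcal{U}') = \left|\bigcup_{U_{jc}\in\mathcal{U}'}U_{jc}\right|$ is a nonnegative, monotone, submodular coverage function on the universe $[M]$, and that CGA is precisely the greedy rule for this function subject to the constraint that exactly one set be drawn from each sub-collection $\mathcal{U}_c$. I would then run the standard greedy potential argument. Writing $f_t$ for the number of users covered after $t$ iterations of CGA and $OPT$ for the optimum of ${\bf K^\star}$, the goal is the per-step inequality $f_t - f_{t-1} \geq \frac{1}{C}\left(OPT - f_{t-1}\right)$. Granting this, the residual gap contracts geometrically, $OPT - f_t \leq \left(1-\frac{1}{C}\right)\left(OPT - f_{t-1}\right)$, so after the $C$ iterations of CGA,
\begin{equation}
f_C \geq \left(1 - \left(1 - \frac{1}{C}\right)^{C}\right)OPT \geq \left(1-\frac{1}{e}\right)OPT,
\end{equation}
which is exactly the claimed guarantee.

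The main obstacle lies in establishing the per-step inequality, and this is the point at which the one-set-per-cell (partition) constraint must be handled with care rather than invoked as a plain cardinality bound. Let $O=\{o_1,\ldots,o_C\}$ be an optimal selection with $o_c\in\mathcal{U}_c$. By submodularity the marginal gains of the $C$ optimal sets over the current greedy solution sum to at least $OPT-f_{t-1}$; the delicate issue is that CGA may only select from cells it has not yet used, so I must argue that the optimal sets lying in the \emph{unused} cells already account for the required $1/C$ share of the residual. The natural route is a charging argument: each optimal set in an already-used cell is charged against the greedy set picked from that same cell, using that the greedy pick had maximal marginal at the time it was chosen and that marginals only shrink as the solution grows. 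Showing that this charging leaves enough marginal value among the feasible (unused-cell) optimal sets to force a gain of at least $\frac{1}{C}\left(OPT-f_{t-1}\right)$ is where I expect the real difficulty, since a careless bound here loses a constant factor and degrades the guarantee toward $\frac{1}{2}$; controlling that loss is the crux of the argument.

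For the optimality half, I would invoke an approximation-preserving reduction from the Maximum Coverage Problem, of the same flavor as the reduction used for Theorem~\ref{theorem:np_hard}. Given an MCP instance $(\mathcal{S},\mathcal{T},k)$, I take $[M]=\mathcal{S}$, set the number of cells to $C=k$, and give every cell an identical copy of the collection $\mathcal{T}$ (one PRB per set). Then a feasible solution of ${\bf K^\star}$, which picks one set per cell, corresponds to a choice of $k$ sets from $\mathcal{T}$ (repetitions never help coverage), and the two optima coincide, so any polynomial-time $\alpha$-approximation for ${\bf K^\star}$ yields a polynomial-time $\alpha$-approximation for MCP. Since MCP admits no polynomial-time $\left(1-\frac{1}{e}+\epsilon\right)$-approximation for any $\epsilon>0$ unless $\mathrm{P}=\mathrm{NP}$~\cite{mcp}, the same inapproximability transfers to ${\bf K^\star}$, so no polynomial-time algorithm can beat the factor achieved by CGA.
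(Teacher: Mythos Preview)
Your plan mirrors the paper's exactly: establish the per-step inequality $m_{n+1}-m_n \geq (OPT-m_n)/C$ (the paper's Lemma~\ref{lemma:1}), telescope it to $m_C \geq (1-1/e)\,OPT$ (Lemma~\ref{lemma:2} and the theorem), and reuse the MCP reduction of Theorem~\ref{theorem:np_hard} for the inapproximability. The hardness half of your argument is correct and is precisely what the paper does via Algorithm~\ref{algo:solution_map}.

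Where you diverge is that you explicitly flag the one-set-per-cell constraint as an obstruction to the per-step inequality, whereas the paper's proof of Lemma~\ref{lemma:1} simply asserts $m_{n+1}-m_n \geq \max_c |U_c^\star \cap M_n^C|$ without addressing the fact that CGA at step $n{+}1$ is confined to cells not yet used. Your caution is well founded: that inequality can fail. Take $C=N=2$ with $U_{11}=\{1,2\}$, $U_{21}=\{3,4\}$, $U_{12}=\{1,2\}$, $U_{22}=\emptyset$; then $OPT=4$, but if CGA breaks the three-way tie in step~1 by choosing $U_{11}$, it is left with only $\mathcal{U}_2$ and finishes with coverage $2$, so $m_2-m_1=0 < b_1/C = 1$ and $m_C/OPT = 1/2 < 1-1/e$. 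This is exactly the ``degradation toward $\tfrac{1}{2}$'' you anticipate: CGA is greedy maximization of a monotone submodular function over a partition matroid, for which the standard provable ratio is $\tfrac{1}{2}$, not $1-1/e$. So the difficulty you isolate is not a wrinkle to be smoothed by a charging argument; it is a genuine gap that the paper's own proof glosses over as well, and the $(1-1/e)$ achievability does not follow from the argument as written in either place.
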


\begin{algorithm}
	\KwIn{ Universe $[M]$, ${\cal U} = \{ {\cal U}_1, \ldots , {\cal U}_C\}$, $C$ }
	Initialize: ${\cal U}_G = \phi$\\
	\For{$n = 1:C$}{
	Pick $U_{j^\star c^\star} \in {\cal U}$ that covers the maximum number of elements from $[M] \setminus \bigcup_{U_{jc} \in {\cal U}_G}U_{jc}$ \\
	${\cal U}_G \leftarrow {\cal U}' \bigcup \{U_{j^\star c^\star}\}$ \\
	${\cal U} \leftarrow {\cal U} \setminus {\cal U}_{c^\star}$
	}
	\caption{Greedy Approximation Algorithm for ${\bf K^\star}$}
	\label{algo:greedy}
\end{algorithm}

Let $OPT$ denote the number of UEs that would be served by the optimal solution. Let $m_n$ be the total number of UEs served by CGA up to and including the $n^{th}$ iteration. $b_n = OPT - m_n$ is the difference between the number of UEs served by the optimal algorithm and the number of UEs served by CGA up to the $n^{th}$ iteration. Note that $m_0 = 0, b_0 = OPT$ and $m_C$ is the total number of UEs served by CGA. 
\par
In order to determine the approximation ratio of CGA, we first prove the following two results that will eventually help us determine the approximation ratio in Theorem~\ref{theorem:approximation}.

\begin{lemma} \label{lemma:1}
 $m_{n+1}-m_n \geq \frac{b_n}{C}$.
\end{lemma}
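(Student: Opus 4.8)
The plan is to run the standard greedy-coverage argument, adapted to the one-set-per-cell constraint of ${\bf K^\star}$. Let $R_n = [M] \setminus \bigcup_{U_{jc}\in{\cal U}_G} U_{jc}$ be the set of users still uncovered after the first $n$ iterations of CGA, so that exactly $m_n$ users are covered and $R_n$ holds the rest. By the greedy rule, $m_{n+1}-m_n$ equals the number of \emph{new} users captured in iteration $n+1$, i.e. $m_{n+1}-m_n = \max |U_{jc}\cap R_n|$, where the maximum is over all sets $U_{jc}$ that are still available (those lying in a cell whose sub-collection ${\cal U}_c$ has not yet been removed in the line ${\cal U}\leftarrow{\cal U}\setminus{\cal U}_{c^\star}$). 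It therefore suffices to exhibit a single available set meeting $R_n$ in at least $b_n/C$ users.

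\textbf{Bounding the residual optimum.} First I would control how much coverage the optimum still offers on $R_n$. Let $O^\star=\{O_1,\ldots,O_C\}$ be an optimal solution for ${\bf K^\star}$, with $O_c\in{\cal U}_c$ its chosen set in cell $c$, so $|\bigcup_c O_c| = OPT$. Since only $m_n$ users are covered so far, the optimum must still reach at least $b_n = OPT - m_n$ of the uncovered users, that is $|(\bigcup_c O_c)\cap R_n| \geq OPT - m_n = b_n$. A union of $C$ sets covers no more than the sum of the individual coverages, so $\sum_{c=1}^{C} |O_c\cap R_n| \geq b_n$, and averaging over the $C$ cells yields a cell $c^\dagger$ with $|O_{c^\dagger}\cap R_n| \geq b_n/C$.

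\textbf{Closing the argument.} If $O_{c^\dagger}$ is still available at iteration $n+1$, then the set CGA actually selects is at least as good on $R_n$, giving $m_{n+1}-m_n \geq |O_{c^\dagger}\cap R_n| \geq b_n/C$, which is exactly the claim.

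\textbf{Main obstacle.} I expect the delicate point to be precisely the availability of $O_{c^\dagger}$ under the partition (one-set-per-cell) constraint: the averaging step may single out a cell $c^\dagger$ that CGA has already used, in which case $O_{c^\dagger}$ is no longer selectable and the final inequality can fail. To handle this I would restrict the averaging to the $C-n$ cells not yet used and account separately for the coverage the optimum places in the already-used cells. The natural bound is that when CGA consumed a used cell it preferred a set covering at least as many then-uncovered users as $O_c$ would have, and since $R$ only shrinks, the used-cell contribution is controlled by the gains CGA already made, i.e. by $m_n$. Showing that this used-cell term does not erode the $b_n/C$ lower bound is the crux of the proof; a crude accounting only reproduces the $1/2$ factor familiar from greedy under matroid constraints, so obtaining the stated per-step bound (and hence the $1-1/e$ guarantee in Theorem~\ref{theorem:approximation}) hinges on handling this term tightly.
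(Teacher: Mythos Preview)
Your outline through the averaging step is exactly the paper's argument: the paper lets $\{U_1^\star,\ldots,U_C^\star\}$ be an optimal solution, bounds $\sum_c |U_c^\star\cap M_n^C| \geq OPT - m_n$, takes the maximum over $c$, and then asserts $m_{n+1}-m_n \geq \max_c |U_c^\star\cap M_n^C|$ on the grounds that CGA is greedy. The paper does \emph{not} address the availability issue you flag in your ``Main obstacle'' paragraph; it simply writes that last inequality down without further justification.

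Your instinct is correct, and the obstacle is not merely delicate --- it is fatal to the lemma as stated. Take $C=2$, $N=2$ with ${\cal U}_1=\{\,U_{11}=\{1,2,3,6\},\ U_{21}=\{4,5\}\,\}$ and ${\cal U}_2=\{\,U_{12}=\{1,2,3\},\ U_{22}=\emptyset\,\}$. Here $OPT=5$ (choose $U_{21}$ and $U_{12}$). CGA's first pick is uniquely $U_{11}$, which removes ${\cal U}_1$; then no set in ${\cal U}_2$ covers any new user, so $m_1=4$ and $m_2=4$. For $n=1$ we have $b_1=OPT-m_1=1$ and $b_1/C=1/2$, yet $m_2-m_1=0$, contradicting the lemma. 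The missing step cannot be supplied: the inequality $m_{n+1}-m_n \geq \max_c |U_c^\star\cap M_n^C|$ is simply false under the one-set-per-cell constraint, because the maximizing cell may already have been consumed by CGA. As you anticipate, the honest worst-case bound for greedy under this partition-matroid constraint is the $1/2$ factor, not the per-step inequality needed for $1-1/e$; so there is no tight accounting trick you are overlooking --- the paper's proof has a genuine gap at precisely the point you identified.
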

\begin{proof}
 The proof is given in Section~\ref{proof:lemma1}.
\end{proof}

\begin{lemma} \label{lemma:2}
 $b_{n+1} \leq \left(1-\frac{1}{C}\right)^{n+1} OPT$.
\end{lemma}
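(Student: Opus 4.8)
The plan is to convert Lemma~\ref{lemma:1} into a one-step contraction on the deficit sequence $b_n$ and then unroll it by induction. The key observation is that the quantity appearing in Lemma~\ref{lemma:1}, namely the number of additional users served in iteration $n+1$, is exactly the amount by which the deficit decreases. Since $b_n = OPT - m_n$ and $b_{n+1} = OPT - m_{n+1}$, subtracting gives $b_n - b_{n+1} = m_{n+1} - m_n$. Substituting this identity into Lemma~\ref{lemma:1} yields
\[
b_n - b_{n+1} \;\geq\; \frac{b_n}{C},
\]
which rearranges to the contraction
\[
b_{n+1} \;\leq\; \left(1 - \frac{1}{C}\right) b_n .
\]

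With this recursion in hand, I would prove the claim by induction on $n$. For the base case, applying the contraction once and using $b_0 = OPT$ gives $b_1 \leq \left(1 - \frac{1}{C}\right) OPT$, which matches the statement at $n = 0$. For the inductive step, assuming $b_n \leq \left(1 - \frac{1}{C}\right)^{n} OPT$, the contraction gives
\[
b_{n+1} \;\leq\; \left(1 - \frac{1}{C}\right) b_n \;\leq\; \left(1 - \frac{1}{C}\right)\left(1 - \frac{1}{C}\right)^{n} OPT \;=\; \left(1 - \frac{1}{C}\right)^{n+1} OPT,
\]
completing the induction and hence the proof. (Equivalently, one can simply iterate the contraction $n+1$ times from $b_0 = OPT$ without explicitly invoking induction.)

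In truth there is no serious obstacle here: all the combinatorial content of the greedy argument has already been absorbed into Lemma~\ref{lemma:1}, and what remains is a routine geometric-decay estimate. The only point that warrants a moment's care is the translation step, i.e., recognizing that $m_{n+1} - m_n = b_n - b_{n+1}$ so that the ``per-iteration progress'' lower bound of Lemma~\ref{lemma:1} becomes a multiplicative decay bound on the deficit. Once that identity is noted, the factor $\left(1 - \frac{1}{C}\right)$ falls out immediately, and this lemma then feeds directly into Theorem~\ref{theorem:approximation} via the standard bound $\left(1 - \frac{1}{C}\right)^{C} \leq e^{-1}$ to produce the $\left(1 - \frac{1}{e}\right)$ approximation guarantee.
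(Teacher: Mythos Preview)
Your proof is correct and is essentially the same as the paper's: both derive the one-step contraction $b_{n+1} \leq \left(1-\frac{1}{C}\right) b_n$ from Lemma~\ref{lemma:1} via the identity $m_{n+1}-m_n = b_n - b_{n+1}$ and then unroll it by induction from $b_0 = OPT$. The only cosmetic difference is ordering---you isolate the contraction first and then induct, whereas the paper embeds the contraction step inside the inductive step.
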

\begin{proof}
 The proof is given in Section~\ref{proof:lemma2}.
\end{proof}
We can now prove Theorem~\ref{theorem:approximation}. 
\begin{proof}
 From Lemma~\ref{lemma:2}, 
 \begin{align*}
  &b_C \leq \left(1-\frac{1}{C}\right)^{C} OPT,\\
  &\implies OPT - m_C \leq \left(1-\frac{1}{C}\right)^{C} OPT \leq \frac{OPT}{e}, \\
  &\implies m_C \geq \left(1-\frac{1}{e} \right) OPT.
 \end{align*}
Thus, CGA provides a $\left(1-\frac{1}{e}\right)$ approximation for ${\bf K^\star}$. \par

This is the best possible approximation for ${\bf K^\star}$. If there was an algorithm that could provide a better approximation, that algorithm would also provide a better approximation for MCP because a solution for ${\bf K^\star}$ can be mapped to a solution of MCP in polynomial time using Algorithm~\ref{algo:solution_map}. This is a contradiction since the greedy algorithm is known to be the best possible approximation for MCP unless P = NP~\cite{feige}. Therefore, no other algorithm can provide a better approximation for ${\bf K^\star}$ than CGA.
\end{proof}

\section{Distributed versus Centralized Allocation} \label{sec:RA_mc_multicast}
The CGA algorithm discussed in the previous section is a centralized algorithm. It requires the presence of a central controller that can make allocation decisions based on a global view of the multicast region. In the absence of such a centralized controller, allocation decisions would be made by each cell individually based only on the knowledge of the UEs connected to it.
In such a distributed allocation, each cell allocates resources to the multicast stream independently. In a multi-connected system, this type of allocation does not fully reap the benefits of multi-connectivity. We illustrate this with the following example. Consider a $2$ cell system containing cells $c_1$ and $c_2$. There are two PRBs available for allocation in each cell. We denote these as $P_1$ and $P_2$. $c_1$ contains four users, $\{ u_1, u_2, u_3, u_4\}$ and $c_2$ has two users $\{u_5, u_6\}$. All users are subscribed to the same multicast stream. $u_1$ has a good channel only in $P_1$ and can successfully receive content only on $P_1$. Users $u_3,u_4, u_5$ and $u_6$ have a good channel only in $P_2$ and can, therefore, successfully receive content only on $P_2$. $u_2$ has a good channel in both the PRBs and would be served on either of them. Users $u_1, u_3, u_4$ are connected to both the cells and can receive content from either of them.
\par
Let us now look at the allocations that will be done by a distributed policy that maximizes the number of users served in each cell independently. $c_1$ considers the users connected to it and allots $P_2$ to the stream because it serves the maximum number of users, ($u_2,u_3,u_4$). $c_2$ also optimizes independently and allocates $P_2$ to the stream to serve ($u_3,u_4,u_5,u_6$). Under this allocation, $u_1$ remains unserved even though it was multi-connected, since it could only receive the content over $P_1$. On the other hand, $u_3$ and $u_4$ receive content from both the cells. In contrast, a centralized policy would consider users of both cells together and allocate $P_2$ to the stream in $c_2$ and $P_1$ in $c_1$ and successfully serve all users in the system.
\par
Any centralized allocation policy, even if it is sub-optimal, will always do better in terms of the number of users successfully served than a policy which allocates resources in a distributed manner. A centralized policy does not necessarily mean that the policy is optimizing over the entire system. Any form of centralization that looks beyond just the individual cell will reap a better performance than a completely uncoordinated allocation. We now define a distributed allocation policy. We use this policy for the purpose of simulations in Section~\ref{sec:simulations_mc_multicast}.

\subsection{Distributed Greedy Allocation}
In Distributed Greedy Allocation (DGA) policy, each eNB allocates resources to the multicast streams by optimizing over the users connected to it. This policy solves ${\bf K^\star}$ for each cell individually. In each sub-frame, an eNB allocates PRBs to the multicast streams such that the maximum number of users associated with it are served. When optimizing over a single cell, the problem can be solved in polynomial time. The pseudo-code for this algorithm is given in Algorithm~\ref{algo:uncoord_greedy}. Recall that the set $U_{jc}$ in Algorithm~\ref{algo:uncoord_greedy} is the set of all users that would be successfully served if PRB $j$ were allocated to the multicast stream in $c$. 
$x_{jc}$ is an indicator random variable that indicates whether or not PRB $j$ has been allocated to the multicast stream in cell $c$.

\begin{algorithm}
	\KwIn{Sets ${\cal U}_c = \{U_{1c}, \ldots, U_{Nc}\}$ for all $c \in [C]$}
	Initialize: $x_{jc} = 0$ for every $j,c$\\
	\For{$c = 1:C$}{	
		Assign $j^\star = \arg\max_j |U_{jc}|$ \\
		$x_{j^\star c} \leftarrow 1$

	}
	\caption{Distributed Greedy Allocation}
	\label{algo:uncoord_greedy}
\end{algorithm}

\section{Simulations} \label{sec:simulations_mc_multicast}
We simulate a seven cell LTE urban macro network~\cite{3gpp_5gchannel}. An eNB is located at the center of each cell and UEs are distributed uniformly at random throughout the cells. LTE specific physical layer conditions have been created using 3GPP channel models~\cite{parameters}. SNR to rate mapping has also been done according to 3GPP specifications~\cite{parameters}.
\par
 A single multicast streaming service is available in all the cells. Other relevant simulation parameters are given in Table~\ref{parameters_mc_multicast}. The cell edge users in the system are multi-connected to all the eNBs in the system. In all the cells, a PRB is allocated to the multicast stream in each sub-frame. The content corresponding to the multicast stream is transmitted at rate $R$ in the PRB allocated to it. Multi-connected users successfully receive a packet in a sub-frame if they can decode the content from at least one of the eNBs. Other users should be able to decode the content from their primary eNBs for being served. 
\par
Resource allocation to the multicast streams is done using the CGA algorithm proposed in Section~\ref{sec:greedy_mc_multicast} and the DGA policy proposed in Section~\ref{sec:RA_mc_multicast}. We use the number of packets delivered successfully and the number of UEs successfully served in a sub-frame as the performance metrics in these simulations. We compare the performance of the centralized (CGA) and distributed (DGA) resource allocation algorithms. We also compare the performance of MC with Single Connectivity (SC) and MBSFN to establish the performance gains resulting from the use of MC in multicast transmissions. For resource allocation in SC multicast, we use the distributed algorithm from Section~\ref{sec:RA_mc_multicast} where each eNB only considers the UEs in its own cell while making the allocation decisions.

 \begin{center}
\begin{table}
\captionof{table}{System Simulation parameters~\cite{3gpp_5gchannel,parameters}} \label{parameters_mc_multicast} 
\begin{center}
\begin{tabular}{ | m{3 cm} | m{5 cm}|} 
 \hline
 \textbf{Parameters} & \textbf{Values}\\
 \hline
 \hline
 System bandwidth & $20$ MHz \\
 \hline
 Cell radius & $250$ m \\ 
 \hline
 Path loss model & L = $128.1 + 37.6 \log10(d)$, $d$ in kilometers\\ 
 \hline
 Lognormal shadowing & Log Normal Fading with $10$ dB standard deviation \\
 \hline
 White noise power density & $-174$ dBm/Hz \\
 \hline
 eNB noise figure & $5$ dB \\
 \hline
 eNB transmit power & $46$ dBm \\
\hline
 \end{tabular}
 \end{center}
 \end{table}
\end{center}

In Figure~\ref{fig:dist_central}, we plot the average number of packets successfully received by UEs under the CGA and the DGA resource allocation algorithms. We transmit one packet in every sub-frame and plot the average number of packets successfully received by all the UEs in the system over a period of $10$ s ($10000$ sub-frames). We observe that CGA performs much better than the DGA policy. It succeeds in successfully serving the UEs in a significantly larger number of sub-frames. 

 \begin{figure}[t!]
        \centering
\includegraphics[scale = 0.5]{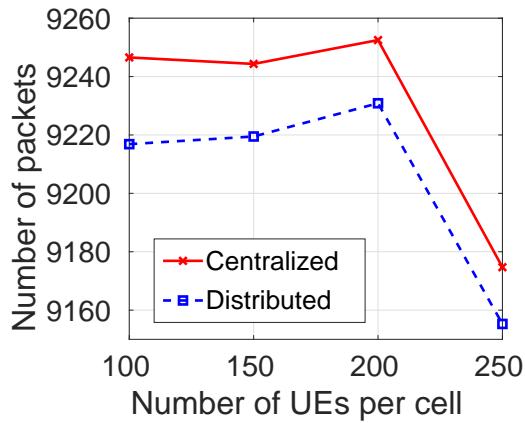} 
\caption{Average umber of packets received successfully under MC using centralized and distributed allocation}
\label{fig:dist_central}
\end{figure}

In Figures~\ref{fig:packets}a to~\ref{fig:traces}b, we compare the performance of MC multicast with that of SC multicast. From here onwards, only the CGA algorithm has been used for allocation in MC multicast. For the plots in Figure~\ref{fig:packets}a to~\ref{fig:unserved}b, data is transmitted at a fixed rate in each sub-frame. The points in these plots are obtained by averaging over $10000$ sub-frames.
\par 
Figure~\ref{fig:packets}a illustrates the number of packets successfully received under MC and SC as the number of users per cell increases. We observe a decline in the number of packets successfully received as the number of UEs increases. However, the number of packets successfully delivered under MC multicast is much larger than that under SC multicast. Figure~\ref{fig:packets}b plots the same metric as a function of cell radius. We observe a decline in the number of packets successfully received as the cell sizes increase. This is expected since the path loss of the cell edge users increases as the cells become larger. The key thing to note here is that the performance gap between MC and SC follows an increasing trend. The relative performance of MC and SC is similar to what we observe in Figure~\ref{fig:packets}a. 

\begin{figure}[t!]
	\centering{
		\begin{tabular}{cc}
			\hspace{-0.2cm}\includegraphics[width=0.45\columnwidth]{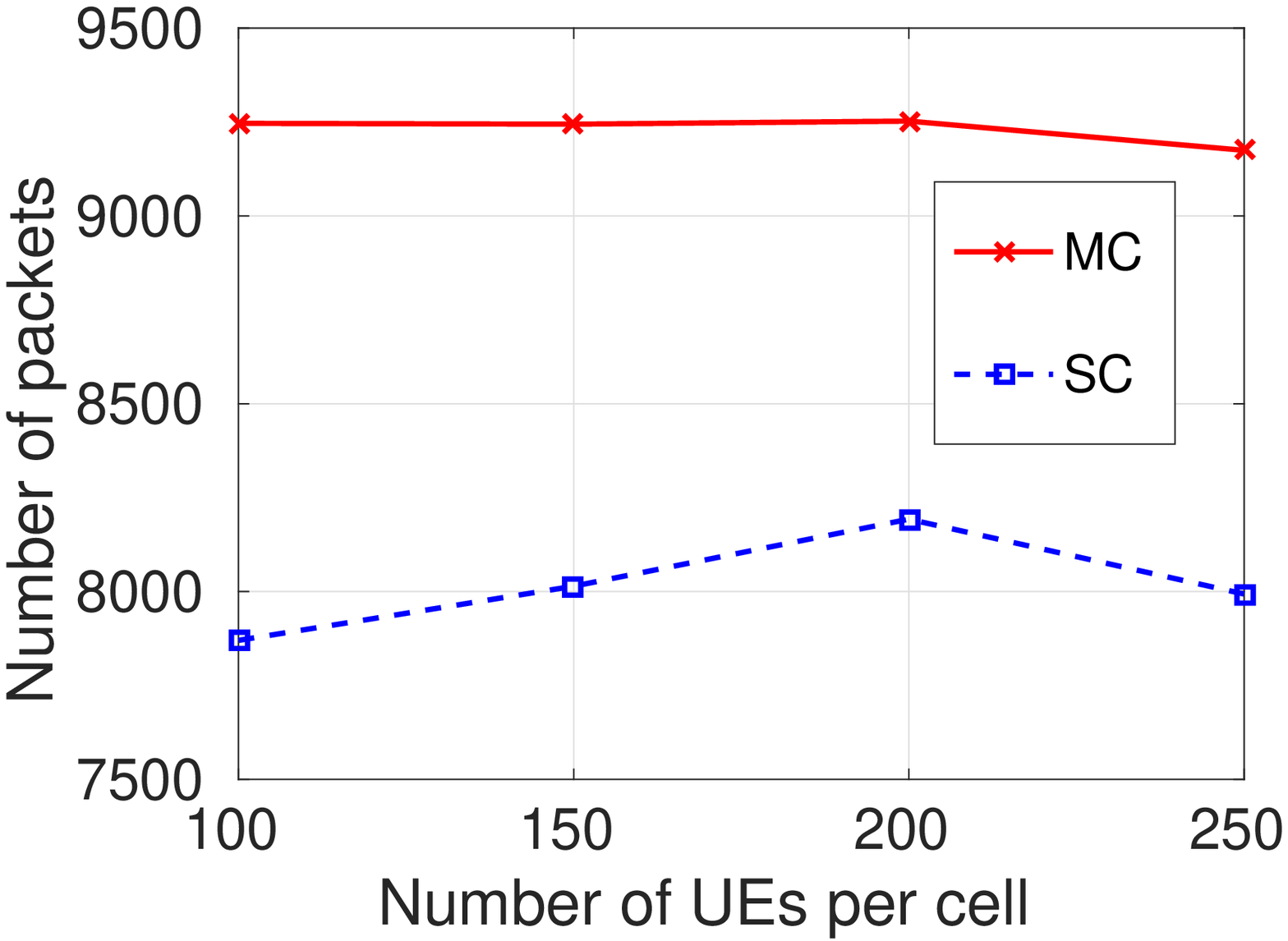}&\hspace{-0.3cm}
			\includegraphics[width=0.45\columnwidth]{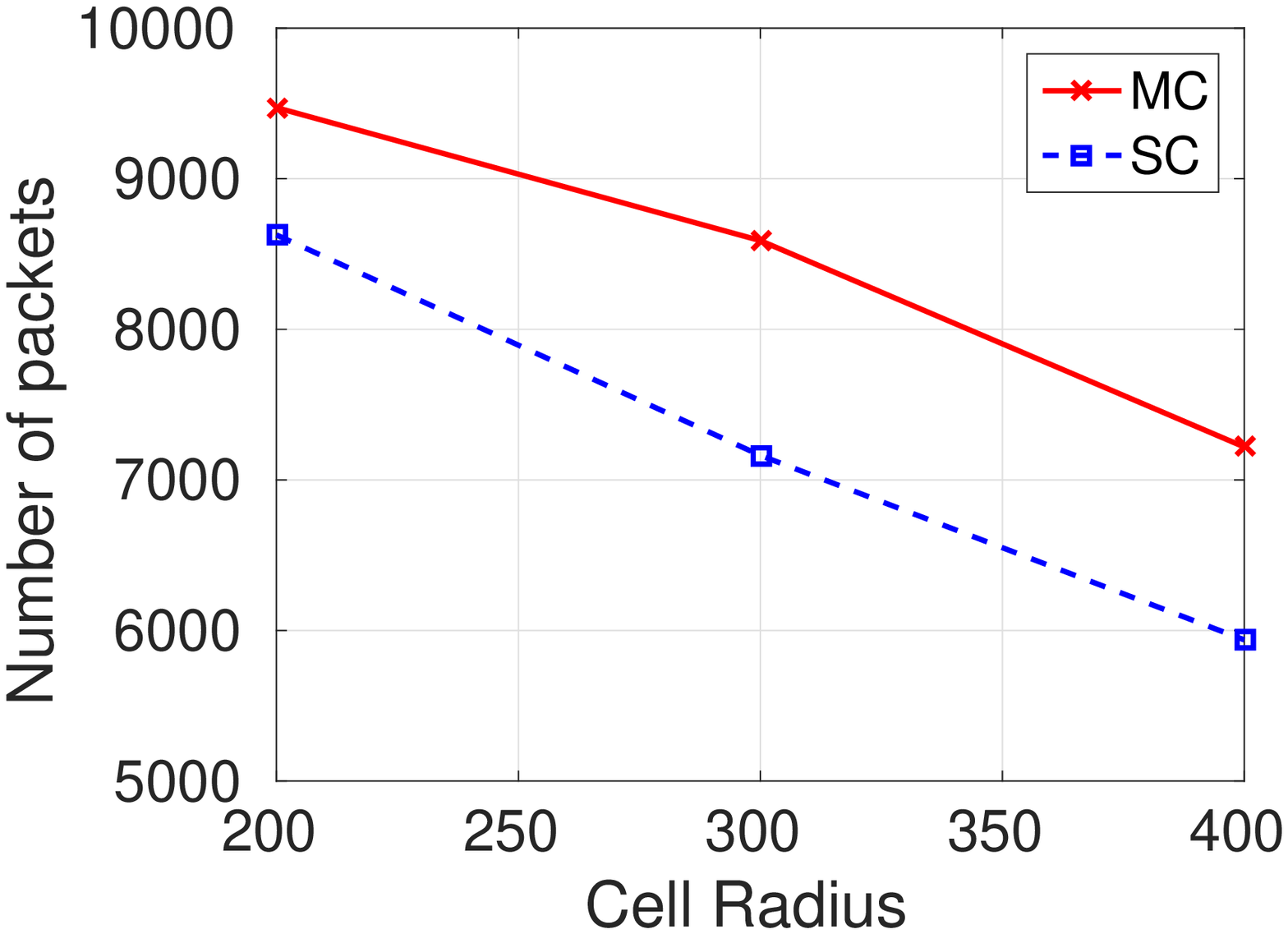}\\
			\hspace{-0.2cm}\scriptsize (a) & \scriptsize (b)
		\end{tabular} 
		\caption{{Average number of packets received successfully under greedy approximation algorithm as a function of, a)\ number of users, b)\ cell radius }}
		\label{fig:packets}
	}
\end{figure}

%
%

Figures~\ref{fig:unserved}a and~\ref{fig:unserved}b plot the average number of users left unserved in a cell per sub-frame as a function of increasing number of users and cell radius respectively.  The number of users left unserved increases as the number of users and cell radius increases. Performance gap between MC and SC multicast also increases as the number of users in each cell increases. We observe that the number of unserved UEs increases nearly three fold in the absence of multi-connectivity

.

\begin{figure}[t!]
	\centering{
		\begin{tabular}{cc}
			\hspace{-0.2cm}\includegraphics[width=0.45\columnwidth]{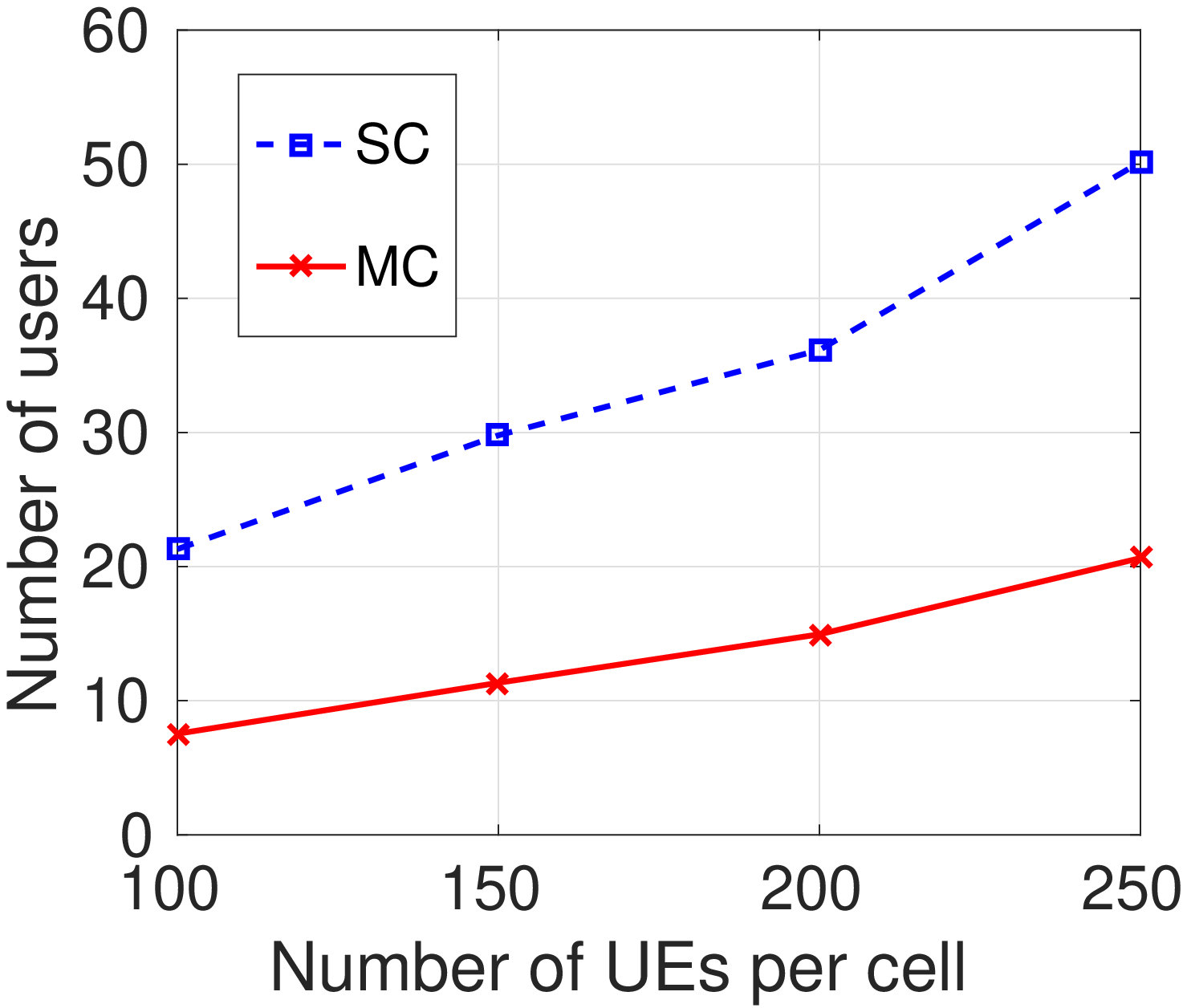}&\hspace{-0.3cm}
			\includegraphics[width=0.45\columnwidth]{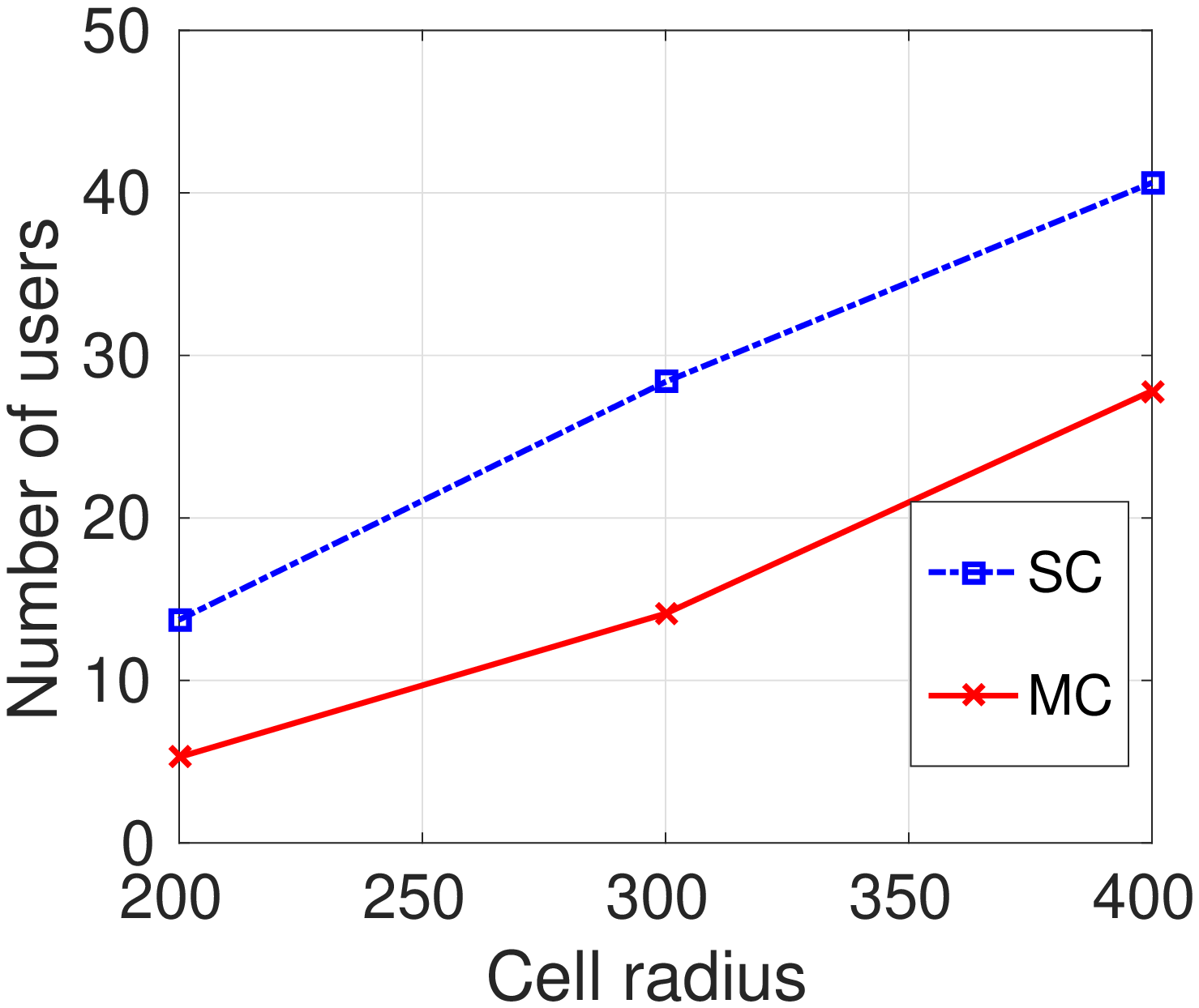}\\
			\hspace{-0.2cm}\scriptsize (a) & \scriptsize (b)
		\end{tabular} 
		\caption{{Average number of unserved users under greedy approximation algorithm as a function of, a)\ number of users, b)\ cell radius }}
		\label{fig:unserved}
	}
\end{figure}

%
%
%

In Figures~\ref{fig:traces}a and~\ref{fig:traces}b, we compare the performance of MC and SC while serving a real-time video stream. To generate realistic video traffic patterns, we use traces of a video of Tokyo Olympics (obtained from \url{http://trace.eas.asu.edu})~\cite{asu_traces}. For these simulations, the rate of transmission varies every sub-frame, according to size of the video frame being transmitted. We run the simulations for the duration of the video stream and then average the results over all the sub-frames. In Figure~\ref{fig:traces}a, we observe that MC multicast delivers significantly more packets successfully than SC multicast. From Figure~\ref{fig:traces}b, we observe that many more UEs are left unserved under SC than under MC. The performance gap between the two increases as the number of UEs in the system increases.

\begin{figure}[t!]
	\centering{
		\begin{tabular}{cc}
			\hspace{-0.2cm}\includegraphics[width=0.45\columnwidth]{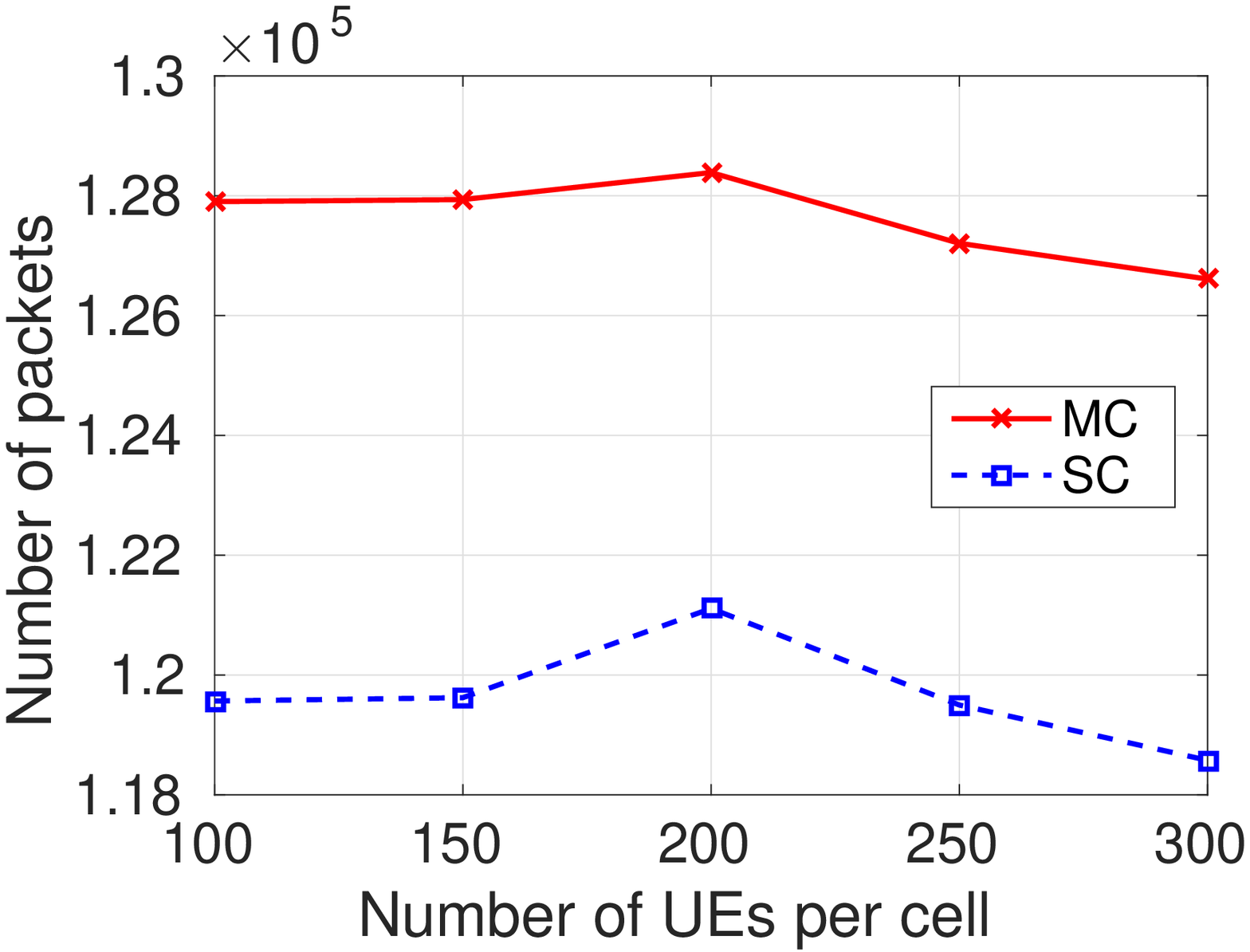}&\hspace{-0.3cm}
			\includegraphics[width=0.45\columnwidth]{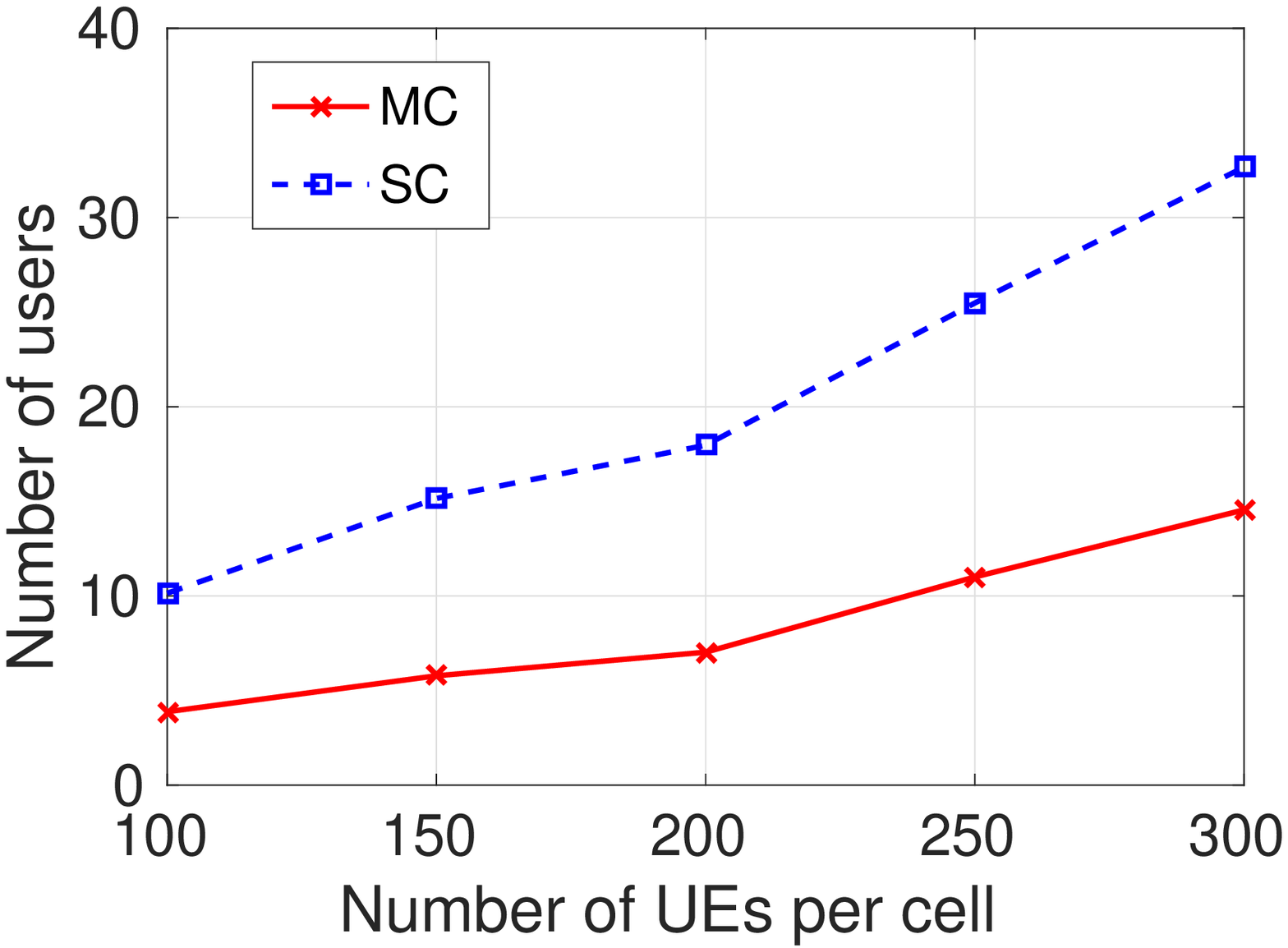}\\
			\hspace{-0.2cm}\scriptsize (a) & \scriptsize (b)
		\end{tabular} 
		\caption{{Performance comparison of MC and SC multicast for a real-time video stream, a)\ average number of packets received successfully, b)\ average number of unserved UEs}}
		\label{fig:traces}
	}
\end{figure}

%
%
%

\begin{figure}[t!]
	\centering{
		\begin{tabular}{cc}
			\hspace{-0.2cm}\includegraphics[width=0.45\columnwidth]{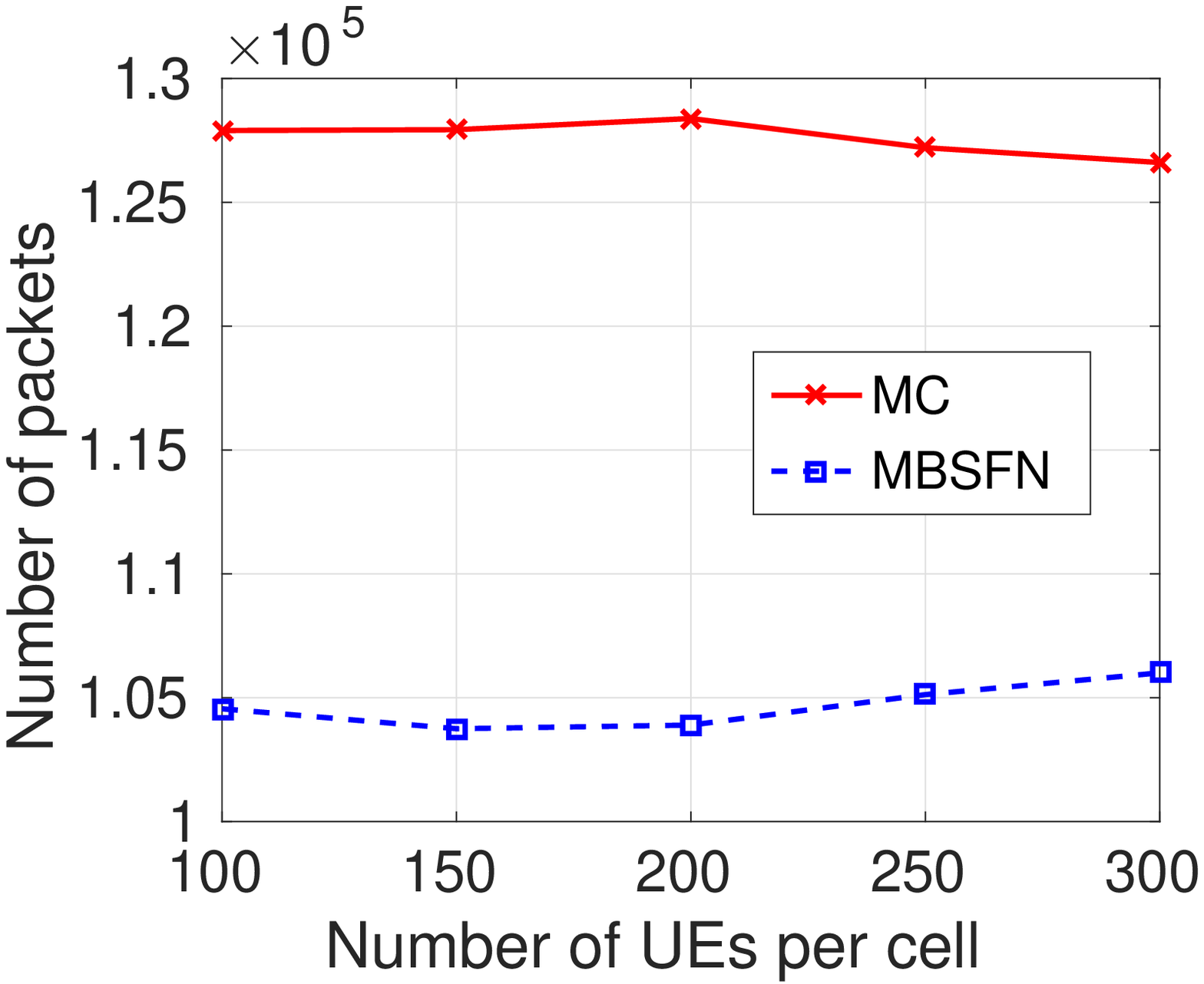}&\hspace{-0.3cm}
			\includegraphics[width=0.45\columnwidth]{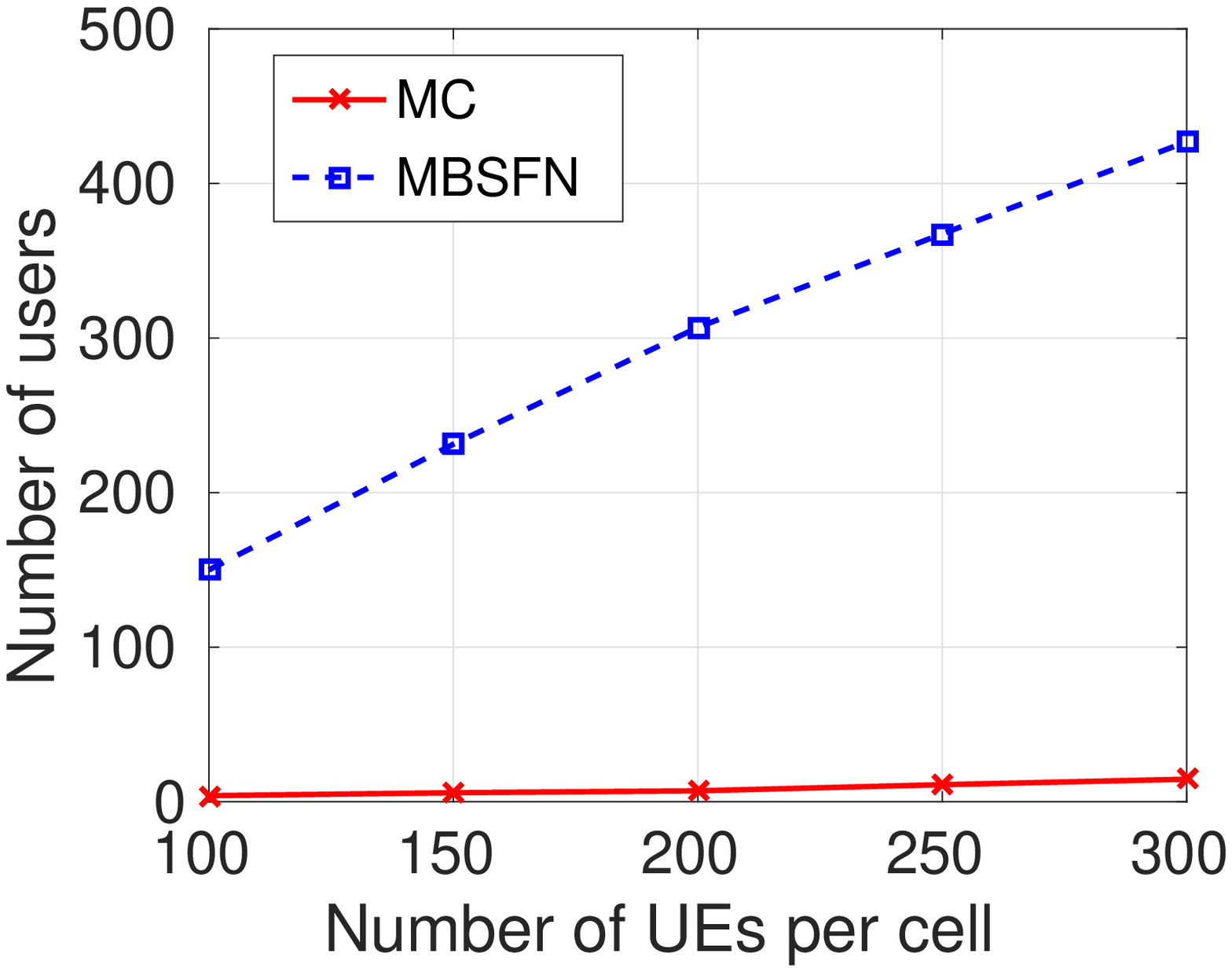}\\
			\hspace{-0.2cm}\scriptsize (a) & \scriptsize (b)
		\end{tabular} 
		\caption{{Performance comparison of MC multicast and MBSFN for a real-time video stream, a)\ average number of packets received successfully, b)\ average number of unserved UEs }}
		\label{fig:mbsfn}
	}
\end{figure}

%
%
%
In Figures~\ref{fig:mbsfn}a and~\ref{fig:mbsfn}b, we compare the performance of MC multicast with that of MBSFN transmissions. We consider that all the cells in our system belong to the same MBSFN area. MBSFN requires the multicast content to be transmitted over the same PRB by all eNBs. For resource allocation in MBSFN, we choose a PRB that serves the maximum number of UEs in the entire system. Here too, we use traces of the video of Tokyo Olympics to generate realistic video traffic patterns. We observe that MC multicast performs remarkably better than MBSFN. It succeeds in delivering a much larger number of packets successfully and is able to serve significantly more UEs than MBSFN. While many UEs remain unserved under MBSFN, nearly all of them are served under MC multicast. These results validate our claims that MC multicast can provide the benefits of MBSFNs while eliminating the need for synchronization. In fact, as observed in Figures~\ref{fig:mbsfn}a and~\ref{fig:mbsfn}b, MC multicast outperforms MBSFN by large margins.
\par
These simulation results clearly indicate that using multi-connectivity with multicast provides a significant performance enhancement in multicast systems. The flexibility of potentially receiving content from multiple eNBs results in more users being served each sub-frame. Thus, MC multicast has tremendous potential especially for use in video streaming services. It can help alleviate the burden on network resources while serving the increasing video streaming traffic. 

\section{Conclusions} \label{sec:conclusions_mc_multicast}
In this paper, we have proposed the use of multi-connectivity in multicast transmissions.
This work establishes that multi-connectivity results in significant performance improvement of multicast services.
We have proposed procedures for enabling the use of multi-connectivity in MBMS. We have formulated the problem of resource allocation in multi-connected multicast systems with the aim of maximizing the number of users served. We have proved this to be an NP-hard problem. A centralized greedy approximation algorithm that provides an approximation ratio of $\left(1-\frac{1}{e}\right)$ for this problem has also been proposed. No polynomial-time algorithm can provide a better approximation. 
Through extensive simulations, we have established that the use of multi-connectivity in multicast transmissions significantly improves the system performance. Multi-connectivity enables serving a much larger number of users at higher data rates. We have also studied the performance of multi-connectivity in serving real-time video streaming applications. To generate video specific traffic patterns in these simulations, we have used traces from actual videos~\cite{asu_traces}. We have also compared the performance of MC multicast to that of MBSFNs. Our simulation results establish that multi-connectivity outperforms MBSFNs by large margins, while eliminating the need for strict synchronization and extended cyclic prefixes.

\section{Acknowledgement}
This work has been supported by the Department of Telecommunications, Ministry of Communications, India as part of the Indigenous 5G Test Bed project.

\section{Proofs} \label{sec:proofs}

\subsection{Proof of Theorem~\ref{theorem:np_hard}} \label{subsec:proof_np}
\begin{proof}
In order to prove that ${\bf K^\star}$ is NP-hard, we first define an algorithm to reduce an instance of MCP to an instance of ${\bf K^\star}$ in polynomial time. Then, we demonstrate how a solution of ${\bf K^\star}$ can be mapped to a solution of MCP. We begin with the reduction. 
 
   \begin{algorithm}
	\KwIn{MCP with collection of sets ${\cal T} = \{T_1,T_2, \ldots, T_m\}$ and a number, $k \in \mathbb{N}$}
	\KwOut{An instance of ${\bf K^\star}$ with}
	
		 $C \leftarrow k$\\ \label{line:cells}
		 $N \leftarrow m$\\ \label{line:prbs}
		 $U_{jc} \leftarrow T_j \ \forall \ j \in \{1,2, \ldots ,m\}, c \in \{1,2,\ldots, C\}$\\
	
	\caption{Pseudo-code for reducing MCP to ${\bf K^\star}$}
	\label{algo:reduction}
\end{algorithm} 
 
The pseudo-code for reducing an instance of MCP to an instance of ${\bf K^\star}$ is given in Algorithm~\ref{algo:reduction}. We define the total number of cells $C$ to be $k$ and the number of PRBs available $N$, as $m$. The set $U_{jc}$ is set to be $T_j$ for every $c$. This reduction can be accomplished in constant time ($\mathcal{O}(C)$). We now demonstrate how a solution of ${\bf K^\star}$ can be mapped to a solution of MCP.
 
    \begin{algorithm}
	\KwIn{Solution of ${\bf K^\star}$ ${\cal U}' \subseteq {\cal U}$ such that $|{\cal U}'| = C$ and $|{\cal U}'\cap {\cal U}_c| =1, \ \forall \ c$}
	\KwOut{Solution of MCP ${\cal T}'$}
	
		$T_j \in {\cal T}'$ iff $U_{jc} \in {\cal U}'$ for some $c$
	
	\caption{Pseudo-code for mapping a solution of ${\bf K^\star}$ to a solution of MCP}
	\label{algo:solution_map}
\end{algorithm}
 
 Let us assume that there exists a polynomial time algorithm for solving ${\bf K^\star}$. Say ${\cal U}'$ is the solution of ${\bf K^\star}$. This means that $|{\cal U}'| = k$, $|{\cal U}'\cap {\cal U}_c| =1, \ \forall \ c$ and ${\cal U}'$ maximizes $|\bigcup_{U_{jc} \in {\cal U}'}U_{jc}|$. This solution can be mapped to a solution of MCP as follows: \par
 Construct set ${\cal T}'$ such that, $T_j \in {\cal T}'$ iff $U_{jc} \in {\cal U}'$. Since $|{\cal U}'| = k$, $|{\cal T}'| \leq k$. Therefore, ${\cal T}'$ is a feasible solution of MCP. The pseudo-code for this mapping is given in Algorithm~\ref{algo:solution_map}. We now need to prove that this is the optimal solution of MCP. We prove this by contradiction as follows. 
 \par
 Say that that there exists ${\cal T}'' \subseteq {\cal T}$ such that $|{\cal T}''| \leq k$ and $|\bigcup_{T_j \in {\cal T}''} T_j| > |\bigcup_{T_j \in {\cal T}'} T_j|$. We can then construct ${\cal U}''$ using ${\cal T}''$ as follows. Say ${\cal T}'' = \{T_{j_1}, \ldots ,T_{j_l}\}, l \leq k$ and say $j_1<j_2<\ldots<j_l$. Then, we can construct ${\cal U}'' = \{U_{j_11}, U_{j_22}, \ldots , U_{j_ll}, U_{1(l+1)}, \ldots , U_{1C} \}$. We have, $|{\cal U}''| = C$, $|{\cal U}''\cap {\cal U}_c| =1, \ \forall \ c$ and $|\bigcup_{U_{jc} \in {\cal U}''}U_{jc}| > |\bigcup_{U_{jc} \in {\cal U}'}U_{jc}|$ which is a contradiction to ${\cal U}'$ being the solution of  ${\bf K^\star}$. Therefore, we cannot have $|{\cal T}''| \leq k$ such that $|\bigcup_{T_j \in {\cal T}''} T_j| > |\bigcup_{T_j \in {\cal T}'} T_j|$, and ${\cal T}'$ is indeed the optimal solution of MCP. 
 \par
 
 Algorithm~\ref{algo:solution_map} maps a solution of ${\bf K^\star}$ to a solution of MCP in constant time ($\mathcal{O}(C)$ assignments).
Thus, a polynomial time solution for  ${\bf K^\star}$ results in a polynomial time solution for MCP as well. This is not possible unless P = NP. Therefore, no polynomial time algorithm exists for solving ${\bf K^\star}$ i.e., ${\bf K^\star}$ is an NP-hard problem. 
\end{proof}

\subsection{Proof of Lemma~\ref{lemma:1}} \label{proof:lemma1}
\begin{proof}
 Let $U_{OPT} = \{U_1^\star, \ldots , U_C^\star \}$ be the optimal solution. Denote by $M_n$, the set of users served at the end of the $n^{th}$ iteration of CGA and by $M_n^C$ the set of users not yet covered after the end of the $n^{th}$ iteration. We have:
 \begin{align}
 & \sum_{c=1}^C |U_c^\star \bigcap M_n^C| \geq |\bigcup_{c=1}^C (U_c^\star \bigcap M_n^C)| \geq OPT-m_n, \nonumber \\
\label{eq:inequality1} &\implies \max_c |U_c^\star \bigcap M_n^C| \geq \frac{(OPT-m_n)}{C} = \frac{b_n}{C}.
 \end{align}
Since CGA picks the set that serves the maximum possible number of yet unserved users in each iteration, we have:
\begin{equation} \label{eq:inequality2}
 m_{n+1}-m_n \geq \max_c |U_c^\star \bigcap M_n^C|.
\end{equation}
From~(\ref{eq:inequality1}) and~(\ref{eq:inequality2}), $m_{n+1}-m_n \geq \frac{b_n}{C}.$
\end{proof}

\subsection{Proof of Lemma~\ref{lemma:2}} \label{proof:lemma2}
\begin{proof}
 We prove this result by induction. For $n = 0$, we have:
 \begin{align*}
  &b_1 \leq \left(1-\frac{1}{C}\right) OPT, \\
  \implies &OPT - m_1 \leq OPT - \frac{OPT}{C}, \\
  \implies &m_1 \geq \frac{OPT}{C} = \frac{b_0}{C},
 \end{align*}
which is true (from Lemma~\ref{lemma:1}). Thus, the result holds for $n=0$. We now assume that $b_{n} \leq \left(1-\frac{1}{C}\right)^{n} OPT$ and prove that  $b_{n+1} \leq \left(1-\frac{1}{C}\right)^{n+1} OPT$. By the definition of $b_n$ and $m_n$, we have:
\begin{align}
\label{eq:step1} &b_{n+1} \leq b_n - \left(m_{n+1}-m_n\right), \\
\label{eq:step2} \implies &b_{n+1} \leq b_n - \frac{b_n}{C} = b_n\left(1-\frac{1}{C}\right), \\
\label{eq:step3} \implies &b_{n+1} \leq \left(1-\frac{1}{C} \right)^{n+1} OPT.
\end{align}
(\ref{eq:step2}) follows from (\ref{eq:step1}) by Lemma~\ref{lemma:1} and (\ref{eq:step3}) follows from (\ref{eq:step2}) by our assumption that $b_{n} \leq \left(1-\frac{1}{C}\right)^{n} OPT$. Thus, by mathematical induction, the result holds for all $n$.
\end{proof}

\bibliographystyle{IEEEtran}
\bibliography{myrefs}

\end{document}